\documentclass[conference,comsoc]{IEEEtran}
\usepackage[T1]{fontenc}
\usepackage{amsmath,amsthm,amssymb,amsmath,bbm, stmaryrd}
\usepackage{tikz,pgfplots}
\usepackage{pgfplotstable}
\pgfplotsset{compat=1.18}
\usetikzlibrary{arrows.meta, decorations.pathreplacing}
\usepgfplotslibrary{statistics,fillbetween}

\usepackage[bookmarks=false]{hyperref}

\usepackage{dblfloatfix}

\usepackage{color}

\usepackage{graphicx} 

\newtheorem{theorem}{Theorem}

\newtheorem{proposition}[theorem]{Proposition}
\newtheorem{lemma}[theorem]{Lemma}

\newcommand{\R}{\mathbb{R}}
\newcommand{\one}{\mathbbm{1}}
\newcommand{\N}{\mathbb{N}}

\newcommand{\setM}{\{1, \dots, M\}}
\newcommand{\setK}{\{1,\dots,K\}}

\renewcommand{\H}{\mathcal{H}}
\newcommand{\E}{\mathcal{E}}
\newcommand{\T}{\mathcal{T}}

\newcommand{\SINR}{\ensuremath{\mathrm{SrINR}}}

\renewcommand{\Pr}{\mathbb{P}}
\newcommand{\Pro}{\Pr^o}

\renewcommand{\epsilon}{\varepsilon}

\title{Performance Guarantees of SIC-Based Decoding for VLSF Codes in Hardcore-Regulated MAC}

\author{
 \IEEEauthorblockN{Kevin Zagalo\IEEEauthorrefmark{1}, Guodong Sun\IEEEauthorrefmark{2}, Philippe Mary\IEEEauthorrefmark{5}, Samir M. Perlaza\IEEEauthorrefmark{2}\IEEEauthorrefmark{3}\IEEEauthorrefmark{4}, Jean-Marie Gorce\IEEEauthorrefmark{1}}

 \IEEEauthorblockA{ Emails: \{kevin.zagalo, guodong.sun, samir.perlaza, jean-marie.gorce\}@inria.fr,  philippe.mary@insa-rennes.fr
             }
   \IEEEauthorblockA{\IEEEauthorrefmark{1} Inria, INSA Lyon, CITI Laboratory, UR3720, Villeurbanne, France
             }
 \IEEEauthorblockA{\IEEEauthorrefmark{2} Centre Inria d’Université Côte d’Azur, Inria, Sophia Antipolis, France
             }
  \IEEEauthorblockA{\IEEEauthorrefmark{5} Univ. Rennes, INSA, CNRS, IETR UMR 6164 F-35000, Rennes, France    }  
     \IEEEauthorblockA{\IEEEauthorrefmark{3} Laboratoire GAATI, Université de la Polyn\'{e}sie fran\c{c}aise, Fa`a`\={a}, French Polynesia
             }
        \IEEEauthorblockA{\IEEEauthorrefmark{4} ECE Dept, Princeton University, Princeton, 08544 NJ, USA
             }

}

\date{\today}

\begin{document}

\maketitle


\begin{abstract}
    Spatial regulation suffices to guarantee reliable, low-latency communication using variable-length stop-feedback (VLSF) codes over multiple access channels. The delay performance of VLSF codes under successive interference cancellation is characterized for transmitters within a coverage region. For deployments modeled by hardcore point processes, the signal-to-interference-plus-noise ratio (SINR) admits a positive lower bound for all transmitters within the coverage region. Under a predefined total error probability constraint, these transmitters are decodable within an upper-bounded delay. For each transmitter, the decoding delay is well approximated by an inverse Gaussian distribution parameterized by the codebook size, target error probability, and the SINR lower bound. Numerical results confirm that hardcore regulation provides uniform performance guarantees across links. 
\end{abstract}
\begin{IEEEkeywords}
 Spatial regulation, variable-length stop-feedback codes, first-passage time, inverse Gaussian distribution.
\end{IEEEkeywords}

\nopagebreak\section{Introduction}

\noindent Variable-length stop-feedback (VLSF) codes, in which transmission continues until the receiver sends a stop signal, are highly efficient for point-to-point communications in the non-asymptotic regime~\cite{polyanskiy2010variable}. 
In practical networks, these theoretical advantages are realized through hybrid automatic repeat request (HARQ) protocols~\cite{yang2022incremental}. 
However, generalizing these insights to large-scale networks with multiple access channels (MACs) introduces additional challenges.
While variable-length feedback codes for MAC have been analyzed in~\cite{yavas2023variable}, that work abstracts away the spatial randomness of node deployments. 
In practice, spatial configurations induce link-quality variations that make it difficult to guarantee uniform reliability and delay performance across the network.

Stochastic geometry has emerged as a primary framework for modeling large-scale wireless networks, and typically employs Poisson point processes (PPPs) to represent transceiver locations~\cite{haenggi2013stochastic}. 
Despite their analytical tractability, PPP models allow arbitrarily close node placements, resulting in peaky interference that prevents deterministic guarantees on the signal-to-interference-plus-noise ratio (SINR).
To address this issue, recent research has introduced spatial regulation mechanisms that impose geometric constraints on node locations~\cite{ke2024}. 
Specifically, hardcore regulation enforces a minimum separation between interferers to bound interference, while void regulation bounds the maximum distance between transmitter-receiver pairs to ensure a minimum received signal strength. 
These constraints jointly induce a strictly positive lower bound on the SINR across all links, thereby enabling the reliability and latency guarantees required for VLSF-based communication systems.

In this paper, we develop an information-theoretic framework to analyze VLSF codes over a hardcore-regulated MAC, where node locations are modeled by a hardcore point process (HPP).
We consider a decoding strategy based on successive interference cancellation (SIC).
Under this hardcore regulation, the SINR of all transmitters within a coverage region is almost-surely lower-bounded.
By integrating an analysis of block-length distributions with geometry-dependent delay metrics, we characterize reliability-latency guarantees in the multi-transmitter setting. 
The main contributions are summarized as follows:
\begin{itemize}
    \item  We decompose the total error probability into two events: a false-alarm event with probability $\epsilon$, in which the decoder outputs an incorrect message for any transmitter, and a miss-detection event with probability $\epsilon^\prime$, in which at least one transmitter in a $k$-transmitter MAC is not decoded within the delay constraint $t_{\max, k}$ under SIC. Defining the total error probability as $\delta$, we prove that this formulation guarantees $\delta \leq k\epsilon + \epsilon^\prime$ for $k$ transmitters within the coverage region. 
    \item For each link, we approximate the cumulative information density  process induced by the transmitted codeword and the channel output by a diffusion process (DP) with positive drift. The decoding delay is then characterized as the first-passage time of this process crossing a threshold determined by a target error probability $\epsilon$.
    \item Since successful decoding at the $k$-th SIC stage depends on the success of all preceding stages, we show that hardcore regulation enables uniform performance guarantees across links in the coverage region. 
\end{itemize}
The remainder of the paper is organized as follows.
Section~\ref{sec:model} introduces the system model.
Section~\ref{sec:main_result} derives an approximation of the maximal delay for Gaussian codebooks over additive white Gaussian noise (AWGN) channels.
Section~\ref{sec:simulations} presents numerical and Monte Carlo results validating the proposed analysis.
Finally, Section~\ref{sec:conclusion} concludes the paper and discusses possible future work.

\nopagebreak\section{System modeling}\label{sec:model}

\nopagebreak\subsection{System Model}
\noindent Let $\R^2$ denote the Euclidean plane and $b(x,r)$ the open ball centered at $x\in\R^2$ of radius $r > 0$. 
We model transmitter locations by a stationary HPP in $\R^2$, denoted by $\Psi \triangleq \{{\psi_0}, \psi_1, \psi_2, \dots\}$, with hardcore distance $H > 0$, defined on the probability space $(\Omega, \mathcal{A}, \Pr)$. 
This process represents transceivers whose locations are regulated by a hardcore MAC protocol with a minimum separation distance, such that $\|\psi_i - \psi_j\| \geq 2H$ for all $i \neq j$.
We focus on a typical receiver located at $\psi_0$, and treat the remaining points $\Psi \setminus \{\psi_0\}$ as transmitters. 
This setup models the uplink of a representative cell in a cellular network, where nodes outside the coverage region of $\psi_0$ may be associated with other receivers and act as interferers.
Let $o=(0,0)$ denote the origin, and let $\Pro$ denote the Palm probability given that the typical receiver is located at the origin, i.e., $\psi_0=o$. 
The transmitters in $\Psi\setminus\{\psi_0\}$ are ordered by their Euclidean distance from the typical receiver, so that for all $0<i\leq j$, $0 < \|\psi_i-\psi_0\| \leq \|\psi_j-\psi_0\|$.
An event is said to hold $\Pr$-almost-surely ($\Pr$-a.s.) if it occurs with probability one.

In the VLSF framework, communication is dynamically evaluated over an evolving  blocklength $n \in \N$, where  $\N$ is the set of positive integers.
Each transmitter $\psi_{k}$ sends a message $W_k$ via a codeword $X_k^n \triangleq (X_{k,1}(W_k), \dots, X_{k,n}(W_k)) \in \mathbb{R}^n$, and transmission stops at a random stopping time $\tau_k$ (defined later in \eqref{eq:taustardef}). 
The typical receiver employs SIC. Since SIC proceeds sequentially, decoding at the $k$-th stage is possible only after all preceding messages have been correctly decoded.
We therefore define the stage-$k$ SIC completion time 
\begin{equation}\label{def:taustar}
T_{k} \triangleq \max\{\tau_1, \dots, \tau_k\},
\end{equation} which represents the time at which the message of the $k$-th transmitter is decoded. By SIC, the decoded signal of the $k$-th transmitter is canceled at time $T_k$. Thus, the aggregate received signal $Y^n\triangleq(Y_1,\ldots, Y_n)$ at the origin is given by:
\begin{equation}\label{eq:receivedsignal}
    {Y}_n \triangleq \sum_{k=1}^{\infty}  \|\psi_{k} - \psi_0\|^{-\alpha/2} {X}_{k,n} \one_{\{n\leq T_{k} \}} + Z_n, 
\end{equation}
where $\alpha > 2$ is the path-loss exponent and $Z_n\in \mathbb{R}$ is AWGN, and $\forall \omega \in \Omega$, $\forall A \in \mathcal{A}$, $\one_A(\omega) = 1$ if $\omega \in A$, and $\one_A(\omega) =0$ if $\omega \notin A$. 
The messages, codewords in codebooks, and noise sequence are random variables measurable with respect to (w.r.t.) the measurable space $(\Omega, \mathcal{A})$.

We assume that the typical receiver aims to decode signals from a subset of transmitters located within a bounded coverage region $b(\psi_0, R_{\text{cov}})$, where $R_{\text{cov}}$ denotes the radius. 
It is shown in \cite{zagalo2026bis} that for a hardcore-regulated MAC, the number of transmitters within such a region is $\Pr$-almost-surely upper-bounded. 
While we do not reproduce the derivation here, this property is leveraged in Theorem~\ref{thm:SINRbound}.
Specifically, we assume that only the $K\triangleq\Psi(b(\psi_0, R_{\text{cov}}))-1$ transmitters closest to the typical receiver are decoded. 
Note that this boundedness property does not hold for PPPs, as discussed in \cite{ke2024}.
Under this model, signals from transmitters outside the coverage region are treated strictly as interference, which allows the received signal in \eqref{eq:receivedsignal} to be partitioned as $
    {Y}^n = \sum_{k=1}^{K} \one_{\{n\leq T_k\}} \|\psi_{k}\|^{-\alpha/2} {X}^{n}_k + \sum_{k^\prime=K+1}^{\infty} \|\psi_{k^\prime}\|^{-\alpha/2} {X}^{n}_{k^\prime} + Z^n$,
$\Pro$-almost-surely. 
In the second summation, the indicator function is omitted since the transmitters with indices $k^\prime > K$ are not decoded, and their interference persists for all $n$.




\nopagebreak\subsection{VLSF decoding with SIC}\label{sec:VLSF}
\noindent The $K$-transmitter discrete-time memoryless MAC is characterized by a transition probability distribution $P_{Y|X_1, \ldots, X_K}$, obtained by marginalizing over both the Gaussian noise and the non-decodable interference from transmitters with indices $k' > K$.
Given the AWGN nature of the channel, this distribution admits a density $p(y|x_1, \ldots, x_K)$. 
Each transmitter located at $\psi_k$ sends a message $W_k \in \setM$, conveying $\log(M)$ nats of information.
Under a random coding scheme, for each $m \in \setM$, the codeword sequence ${X}^n_k(m)$ is generated with independent and identically distributed (i.i.d.) components drawn from the input distribution $P_{X_k}$. 

\begin{figure*}[th]
\setcounter{equation}{7}
    \begin{equation}\label{eq:mutualpdf}
    \imath(x_k; y | x_1, \dots, x_{k-1}) = \frac{1}{2}\log(1+\eta_k) + \frac{(y - \sum_{j=1}^{k-1}\|\psi_j-\psi_0\|^{-\alpha/2} x_j)^2}{2\sigma_k^2(1+\eta_k)} - \frac{(y - \sum_{j=1}^{k}\|\psi_j-\psi_0\|^{-\alpha/2} x_j)^2}{2\sigma_k^2} 
\end{equation}
\setcounter{equation}{2}
\vspace{.5em}
\hrule
\vspace{-1em}
\end{figure*}

In the SIC framework, decoding at stage $k$ is performed conditioned on the successful decoding of all preceding messages.
We define the correct-history event as:
\begin{equation}\label{eq:correct-history}
    \H_{k-1}\triangleq \bigcap_{i=1}^{k-1} \{ \omega \in \Omega :  \hat{W}_i(\omega) = {W}_i(\omega)\}, \quad \H_0 \triangleq \Omega,
\end{equation}
where $\hat{W}$ denotes the estimated message and will be defined in Section~\ref{sec:stoppingrule}. 
Under the event $\H_{k-1}$, the decoder attempts to decode $W_k$ by evaluating the conditional information density 
\begin{equation}\label{eq:n_letter_information_density}
    \imath_n(x^n_k; {y}^n | {x}_{1}^{n_1}, \dots, {x}^{n_{k-1}}_{k-1}) \triangleq \log \frac{p({y}^n|x^{n_1}_1, \dots, x^{n}_{k})}{p({y}^n|x^{n_1}_1, \dots, x^{n_{k-1}}_{k-1})},
\end{equation}
for any integers $n \geq n_{k-1} \geq \dots \geq n_1$. 
Here, $p({y}^n |  x^{n_1}_1, \dots x^{n}_{k})$ is the marginalized channel density derived from $p({y}^n | x^{n_1}_1, \dots x^{n}_{K})$ by treating the signals from the remaining transmitters $\{k+1, \dots, K\}$ as part of the residual interference. 
Note that if $\mathcal{H}_{k-1}$ does not hold, the decoder proceeds with incorrect interference cancellation, and \eqref{eq:n_letter_information_density} no longer represents the correct log-likelihood ratio. Conditioned on the point process realization $\Psi$ and the correct-history event $\mathcal{H}_{k-1}$ defined in \eqref{eq:correct-history}, each stage of the SIC process can be viewed as an independent point-to-point VLSF decoding problem. 
For each candidate message $m\in \setM$, we define the information density sequence at stage $k$ as:
\begin{equation}\label{eq:def_S_nk}
    S_{k}^{(m)}(n) \!\triangleq\! \imath_n\left(X_k^n(m); Y^n | {X}^{\min\{n, T_{k-1}\}}_{k-1},\!\dots,\!{X}^{\min\{n, T_{1}\}}_1\right).
\end{equation}
By reducing each stage to a point-to-point problem, we can apply the following threshold-based decoding result.
\begin{theorem}[\protect{\cite[Theorem 2.1]{guodong_inria_report}}]\label{thm:threshold_based_rule}
	Let $\epsilon>0$ and $M\in \N$. 
	For any threshold $\gamma\geq\log\frac{M-1}{\epsilon}$, the threshold-based VLSF decoding rule satisfies that the average decoding error probability is upper-bounded by $\epsilon$.
\end{theorem}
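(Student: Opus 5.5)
This is the classical Polyanskiy--Poor--Verd\'u argument for stop-feedback codes. I would apply it to the stage-$k$ point-to-point problem, conditioned on $\Psi$ and on $\H_{k-1}$, which is how the paper reduces each SIC stage. First I fix the threshold-based rule. For each candidate $m\in\setM$, let $\tau^{(m)}\triangleq\inf\{n\in\N: S_k^{(m)}(n)\geq\gamma\}$. The decoder stops at $\tau\triangleq\min_m\tau^{(m)}$ and outputs $\hat W$, a message attaining this minimum, with ties broken arbitrarily. By symmetry of the random codebook, I may condition on the transmitted message $W=1$. An error requires $\tau^{(m)}\leq\tau^{(1)}$ for some $m\neq 1$, so in particular $\tau^{(m)}<\infty$. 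The union bound then gives
\begin{equation*}
\Pr[\hat W\neq 1\mid W=1]\leq (M-1)\,\Pr\bigl[\tau^{(2)}<\infty\bigr],
\end{equation*}
where $X^n(2)$ is independent of $(X^n(1),Y^n)$.

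The key step is to bound $\Pr[\tau^{(2)}<\infty]\leq e^{-\gamma}$ by a change of measure. Write $\bar X^n$ for the independent codeword $X^n(2)$. Under the joint law of $(\bar X^n,Y^n)$ (the product law $P_{\bar X^n}\times P_{Y^n}$, with the cancelled codewords as conditioning), the sequence $\exp(S_k^{(2)}(n))$ is exactly the likelihood ratio of the ``matched'' law $P_{\bar X^n Y^n}$ against this product law. This follows from the definition \eqref{eq:n_letter_information_density}. Hence $\exp(S_k^{(2)}(n))$ is a nonnegative martingale with unit mean. Since $\tau^{(2)}$ is a stopping time, I split
\begin{equation*}
\Pr\bigl[\tau^{(2)}<\infty\bigr]=\sum_{n}\Pr\bigl[\tau^{(2)}=n\bigr]
\end{equation*}
and change measure on each event $\{\tau^{(2)}=n\}$, which is measurable with respect to the first $n$ symbols. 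On that event $S_k^{(2)}(n)\geq\gamma$, so
\begin{equation*}
\Pr\bigl[\tau^{(2)}=n\bigr]=\mathbb{E}_{\text{matched}}\bigl[e^{-S_k^{(2)}(n)}\one_{\{\tau^{(2)}=n\}}\bigr]\leq e^{-\gamma}\,\Pr_{\text{matched}}\bigl[\tau^{(2)}=n\bigr].
\end{equation*}
Summing over $n$ gives at most $e^{-\gamma}$. Equivalently, I could invoke Ville's maximal inequality for the nonnegative supermartingale $\exp(S_k^{(2)}(n))$. Combining this with the union bound and $\gamma\geq\log\frac{M-1}{\epsilon}$ yields an error probability of at most $(M-1)e^{-\gamma}\leq\epsilon$. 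Averaging over messages and codebooks keeps the bound for the average error.

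The main obstacle is justifying the martingale/likelihood-ratio identity in the SIC setting. The conditioning prefixes $X_j^{\min\{n,T_j\}}$ have random lengths, so the channel law changes at the random times $T_j$. I would handle this by conditioning on $\Psi$, on $\H_{k-1}$, and on the previously decoded codewords together with the times $T_1,\dots,T_{k-1}$. The key point is that these quantities are independent of the unsent codeword $X^n(2)$. Given them, the channel is memoryless but time-varying and known to the decoder, so $\exp(S_k^{(2)}(n))$ still factorizes symbol-by-symbol into conditional likelihood ratios with unit conditional mean. A second subtlety is whether the times $T_j$ depend on $Y^n$ in a way that breaks the martingale property. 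Here I would argue that each $T_j$ is a stopping time of the observed filtration, so the $\{\tau^{(2)}=n\}$ decomposition above still applies.
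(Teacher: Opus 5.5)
Your argument is correct and is essentially the proof behind the cited result. The paper gives no proof of its own and simply invokes \cite[Theorem~2.1]{guodong_inria_report} and \cite[Theorem~3]{polyanskiy2010variable}, whose proof is exactly your union bound over the $M-1$ competing codewords combined with the change-of-measure bound $\Pr[\bar\tau<\infty]\le e^{-\gamma}$.

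One refinement for your SIC paragraph: conditioning on $\H_{k-1}$ and the $T_j$ does not literally leave a memoryless channel, since these quantities depend on $Y$. Your conclusion still holds, because $\int e^{\imath(x_k;y\mid x_1,\dots,x_{k-1})}\,dP_{X_k}(x_k)=1$ for every $y$ (the denominator in \eqref{eq:n_letter_information_density} is the $P_{X_k}$-mixture of the numerator). Hence $\exp(S_k^{(2)}(n))$ remains a unit-mean martingale even given all of $Y$ and everything else that is independent of the unsent codeword.
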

While this theorem was derived for VLSF codes over time-correlated fading channels, 
the result was originally established as an achievability bound for variable-length feedback codes over memoryless channels in~\cite[Theorem~3]{polyanskiy2010variable}.


\nopagebreak\subsection{Stopping rule and maximal delay}\label{sec:stoppingrule}

\noindent
Assume all transmitters send messages of size $\log M$ nats with a uniform VLSF target error probability $\epsilon$. 
Let $\gamma \triangleq \log \frac{M-1}{\epsilon}$ be the common decoding threshold. 
At stage $k$, we define the induced stopping time for each message $m\in \setM$ as $\tau_k^{(m)} \triangleq \inf\{n>0 :S_{k}^{(m)}(n) \geq \gamma\}$.
The actual decoding time for the $k$-th transmitter is then \begin{equation}\label{eq:taustardef}
     \tau_k \triangleq \min\{\tau_k^{(1)}, \dots , \tau_k^{(M)} \}.
 \end{equation}
Upon stopping, if $m$ is the unique message such that $\tau_k =\tau_k^{(m)}$, then the decoder returns message $\hat{W}_k \triangleq m$.
Otherwise, the decoder declares an error. 
It holds that $\tau_k \leq  \tau_k^{(m)}$ for any $m\in\setM$ since decoding stops as soon as any codeword, correct or incorrect, reaches the threshold.  Consequently, the stopping time associated with the transmitted message provides an upper bound on the actual decoding time.

Finite decoding times are ensured if, for every considered transmitter, at least one message is eventually decoded, i.e., $\bigcap_{k=1}^K \bigcup_{m=1}^M \big\{\tau_k^{(m)} < \infty\big\}$ occurs with probability one. 
This holds in practical settings with strictly positive SINR, where the information density of the transmitted message has positive drift and thus crosses any finite threshold almost surely~\cite[Eq.~(109)]{polyanskiy2010variable}. Consequently, the maximal delay for each user $k\in\setK$ is finite $\Pr$-almost-surely.

\begin{proposition}\label{prop:pertransmittererror}
    For each $k\in\setK$, let $\epsilon > 0$ and $\delta>0$ such that $\delta - K\epsilon \in (0,1)$, $M\in \N$, $\gamma = \log \frac{M-1}{\epsilon}$.   Let $T_k$ be defined as in \eqref{def:taustar}, and introduce the maximal-delay function
    \begin{equation}\label{eq:def_t_max}
            t_{\max, k}(\epsilon^\prime) \triangleq \inf\left\{n > 0 : \Pro\left(T_k > n \mid \Psi\right) \leq \epsilon^\prime \right\}. 
    \end{equation}
Under SIC with the stated stopping rule, and given $\Psi$, if transmitter $k$ has maximal delay $t_{\max, k}(\delta - k\epsilon)$, then $\Pr\left(\Pro\left(\hat W_k \neq W_k \mid \Psi \right) \leq \delta\right)=1,$ which guarantees that the overall decoding error probability does not exceed $\delta$ almost surely conditioned on $\Psi$.
\end{proposition}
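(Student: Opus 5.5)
The plan is to bound the error event for transmitter $k$ by a union of two kinds of failure, conditioned throughout on $\Psi$. Here ``error'' means either decoding the wrong message or failing to decode by the deadline $t_{\max,k}$. The first kind is a false alarm at some SIC stage $i\le k$: the decoder stops at $\tau_i$ on a wrong codeword, or declares an error because the minimizer is not unique. The second kind is a miss: the stage-$k$ completion time exceeds the deadline, $T_k > t_{\max,k}(\delta-k\epsilon)$. Write $t^\star \triangleq t_{\max,k}(\delta-k\epsilon)$. The error event then satisfies
\begin{equation*}
\{\hat W_k\neq W_k\}\cup\{T_k>t^\star\}\subseteq \bigcup_{i=1}^{k}\big(\H_{i-1}\cap\{\hat W_i\neq W_i\}\big)\cup\{T_k>t^\star\}.
\end{equation*}
This inclusion holds because $\{\hat W_k\neq W_k\}$ splits according to the first index $i\le k$ at which $\hat W_i\neq W_i$. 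On that piece, $\H_{i-1}$ holds by construction.

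For each term $\H_{i-1}\cap\{\hat W_i\neq W_i\}$, I use the reduction stated before Theorem~\ref{thm:threshold_based_rule}. Conditionally on $\Psi$ and on $\H_{i-1}$, the previously decoded codewords are cancelled correctly, so the sequence $S_i^{(m)}(n)$ in \eqref{eq:def_S_nk} is the genuine point-to-point information density. The interference from stages $i+1,\dots,K$ and from the transmitters outside the coverage region is marginalized into the channel law. The stage-$i$ rule is therefore exactly the threshold rule with $\gamma=\log\frac{M-1}{\epsilon}$. Theorem~\ref{thm:threshold_based_rule} then gives
\begin{equation*}
\Pro\big(\hat W_i\neq W_i,\ \H_{i-1}\mid\Psi\big)\le \Pro\big(\hat W_i\neq W_i\mid \H_{i-1},\Psi\big)\le\epsilon.
\end{equation*}
Summing over $i=1,\dots,k$ yields at most $k\epsilon$.

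For the miss term, the definition \eqref{eq:def_t_max} of the infimum, together with right-continuity over integer $n$ (the infimum over $n\in\N$ is attained), gives $\Pro(T_k>t^\star\mid\Psi)\le\delta-k\epsilon$. The hypothesis $\delta-K\epsilon\in(0,1)$ guarantees $\delta-k\epsilon\in(0,1)$ for every $k\le K$. The set in \eqref{eq:def_t_max} is therefore nonempty, since $T_k<\infty$ a.s. as argued before the proposition, so $t^\star<\infty$. Combining the two bounds by the union bound gives $\Pro(\text{error for }k\mid\Psi)\le k\epsilon+\delta-k\epsilon=\delta$. This holds for $\Pr$-almost every realization of $\Psi$, which is the claim $\Pr(\Pro(\hat W_k\neq W_k\mid\Psi)\le\delta)=1$.

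The main obstacle is the step that applies Theorem~\ref{thm:threshold_based_rule} conditionally on $\H_{i-1}$. Conditioning on the correct-history event could in principle bias the codebook and noise distribution seen at stage $i$. Moreover, the stopping times $T_{i-1}$ that enter $S_i^{(m)}$ are random and depend on earlier stages. My plan is to avoid conditioning on $\H_{i-1}$ and instead bound the joint event $\H_{i-1}\cap\{\hat W_i\neq W_i\}$. On $\H_{i-1}$, the stage-$i$ statistic coincides with a \emph{genie-aided} statistic computed with the true $X_1,\dots,X_{i-1}$. The stage-$i$ error event with correct history is therefore contained in the genie-aided error event, whose probability is at most $\epsilon$ by Theorem~\ref{thm:threshold_based_rule}, because the genie-aided problem is a bona fide point-to-point VLSF problem. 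For this argument I need the genie-aided channel to be memoryless given $\Psi$, with the cancellation times $T_j$ being stopping times of the observation filtration. I would verify these two points carefully, but they appear to follow from the model assumptions.
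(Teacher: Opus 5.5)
Your proposal is correct and follows essentially the same route as the paper's proof. It takes a union bound over the $k$ stage-wise errors occurring under correct history $\H_{i-1}$, each bounded by $\epsilon$ via Theorem~\ref{thm:threshold_based_rule} (these are the paper's false-alarm terms), and adds the miss event $\{T_k > t_{\max,k}(\delta-k\epsilon)\}$, which is bounded by $\delta-k\epsilon$ directly from the definition of $t_{\max,k}$, giving $k\epsilon+(\delta-k\epsilon)=\delta$. Your genie-aided coupling for the stage-$i$ term is a useful refinement of a step the paper simply asserts, namely that conditionally on $\Psi$ and $\H_{i-1}$ each stage is a bona fide point-to-point VLSF problem. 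Likewise, using only the inclusion $\{\hat W_k\neq W_k\}\subseteq\bigcup_{i\le k}(\H_{i-1}\cap\{\hat W_i\neq W_i\})$, rather than the equality the paper states, is slightly more careful.
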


\begin{proof}
    See Appendix~\ref{proof:pertransmittererror}.
\end{proof}

\nopagebreak\section{Deterministic approximation of \texorpdfstring{\ensuremath{t_{\max, k}(\epsilon^\prime)}}{tmax} for Gaussian codebooks and AWGN Channels}\label{sec:main_result}

\begin{figure*}[th]
    \centering
    \begin{tikzpicture}[xscale=1.4, yscale=0.8]

    \def\A{5}          
    \def\Cone{1.0}      
    \def\Ctwo{0.8}      
    \def\Cthree{0.6}    
    \def\Tone{\A/\Cone} 
    \def\Ttwo{\A/\Ctwo} 
    \def\Tthree{\A/\Cthree}

    \draw[->,>=stealth, thick] (0,0) -- (11.5,0) node[right] {$n$};
    \draw[->, >=stealth,thick] (0,0) -- (0,7) node[midway, left=0.2cm, rotate=90, anchor=south] {$S_k(n)$};

    \draw[dashed, thick, gray!80] (0,\A) node[left, text=black] {$\gamma$} -- (11,\A);

    \foreach \s/\col/\Cval/\numruns in {1/red/\Cone/5, 6/green!60!black/\Ctwo/5, 11/blue/\Cthree/20} {
        \foreach \run in {1,...,\numruns} {
            \pgfmathsetseed{(\s+\run)*123} 
            \draw[\col, opacity=0.2, thin] (0,0) 
            \pgfextra{
                \def\currX{0}
                \def\currY{0}
            }
            \foreach \i in {1,...,300} {
                \pgfextra{
                    \pgfmathsetmacro{\step}{0.04}
                    \pgfmathsetmacro{\nextYraw}{abs(\currY + \step*\Cval + (0.4*sqrt(\step)*rand))}
                    \pgfmathsetmacro{\nextXraw}{\currX + \step}
                    \ifdim \nextYraw pt > \A pt
                        \pgfmathsetmacro{\fraction}{(\A - \currY) / (\nextYraw - \currY)}
                        \pgfmathsetmacro{\nextX}{\currX + \fraction * \step}
                        \pgfmathsetmacro{\nextY}{\A}
                        \def\dobreak{1}
                    \else
                        \pgfmathsetmacro{\nextX}{\nextXraw}
                        \pgfmathsetmacro{\nextY}{\nextYraw}
                        \def\dobreak{0}
                    \fi
                }
                -- (\nextX, \nextY)
                \pgfextra{
                    \xdef\currX{\nextX}
                    \xdef\currY{\nextY}
                    \ifnum\dobreak=1 \breakforeach \fi
                }
            };
        }
    }

    \draw[red, thick] (0,0) -- (\Tone, \A);
    \draw[green!60!black, thick] (0,0) -- (\Ttwo, \A);
    \draw[blue, thick] (0,0) -- (\Tthree, \A);

    \draw[red] (1.1, 0) arc (0:{atan(\Cone)}:1.1);
    \draw[green!60!black] (1.9, 0) arc (0:{atan(\Ctwo)}:1.9);
    \draw[blue] (2.7, 0) arc (0:{atan(\Cthree)}:2.7);

    \draw[<-, >=stealth, thin] (2.7, 0.2) .. controls (3.25, 0.7) .. (3.95, 0.7) 
        node[right, fill=white, fill opacity=1, inner sep=1pt, font=\small] {$C(\eta_3) \geq C\left(\eta_\alpha^H(\|\psi_3\|)\right)$};

    \foreach \nVal/\amp in {3.2/0.9} {
        \pgfmathsetmacro{\yVal}{\nVal*\Cone}
        \pgfmathsetmacro{\sigma}{0.25*sqrt(\nVal)}
        \begin{scope}[shift={(\nVal, \yVal)}, rotate=90]
            \fill[red, opacity=0.15] plot[domain=-1.8:1.8, samples=50] (\x, {\amp*exp(-(\x*\x)/(2*\sigma^2))}) -- (0,0) -- cycle;
            \draw[red, thick, opacity=0.7] plot[domain=-1.8:1.8, samples=50] (\x, {\amp*exp(-(\x*\x)/(2*\sigma^2))});
            \draw[black, dashed] (\A-3.5, 0) -- (-3.5, 0)  node[below] {$n_0$};
        \end{scope}
    }

    \draw[<-, >=stealth, black, thin] (3.1, 4.2) .. controls (2.5, 5.0) .. (2, 6.2) 
        node[above, font=\Large, fill=white, inner sep=1pt] {\ensuremath{\substack{\text{DP approximation of }S_1(n_0)
\\\mathcal{N}(n_0C(\eta_1), n_0V(\eta_1))}}};

    \foreach \T/\col/\labelIndex/\height/\width in {
        \Tone/red/1/1.3/2.0,
        \Ttwo/green!60!black/2/1.1/2.5,
        \Tthree/blue/3/0.9/3.0%
    } {
        \draw[dashed, black] (\T, \A) -- (\T, 0);
        \node[black, below] at (\T, 0) {$\frac{\gamma}{C(\eta_{\labelIndex})}$};
        
        \fill[\col, opacity=0.15] (\T-\width*0.4, \A)
            .. controls (\T-\width*0.2, \A) and (\T-\width*0.3, \A+\height) .. (\T-\width*0.1, \A+\height)
            .. controls (\T+\width*0.1, \A+\height) and (\T+\width*0.3, \A+0.1) .. (\T+\width*0.7, \A+0.05) 
            -- (\T+\width*0.7, \A) -- (\T-\width*0.4, \A) -- cycle;
            
        \draw[\col, thick, opacity=0.9] (\T-\width*0.4, \A)
            .. controls (\T-\width*0.2, \A) and (\T-\width*0.3, \A+\height) .. (\T-\width*0.1, \A+\height)
            .. controls (\T+\width*0.1, \A+\height) and (\T+\width*0.3, \A+0.1) .. (\T+\width*0.7, \A+0.05);
    }

    \def\qThree{9.5}
    \draw[dashed, black, thick] (\qThree, \A) -- (\qThree, 0) node[below] {$\bar F^{-1}_{\|\psi_3\|}(\epsilon^\prime)$};
    
    \begin{scope}
        \clip (\qThree, \A) rectangle (11.5, 7);
        \fill[blue, opacity=0.5] (\Tthree-3.0*0.4, \A)
            .. controls (\Tthree-3.0*0.2, \A) and (\Tthree-3.0*0.3, \A+0.9) .. (\Tthree-3.0*0.1, \A+0.9)
            .. controls (\Tthree+3.0*0.1, \A+0.9) and (\Tthree+3.0*0.3, \A+0.1) .. (\Tthree+3.0*0.7, \A+0.05) 
            -- (\Tthree+3.0*0.7, \A) -- cycle;
    \end{scope}

    \draw[<-, >=stealth, black, thin] (\qThree+0.4, \A+0.1) .. controls (\qThree+0.8, \A-0.5) .. (\qThree+1.5, \A-0.5)
        node[right, font=\small] {$\epsilon^\prime$};

    \def\vShift{0.6}
    \def\halfWidth{0.4} 
    \draw[<->, >=stealth, black] (\Tone-\halfWidth, \A+\vShift+0.8) -- (\Tone+\halfWidth, \A+\vShift+0.8)
        node[midway, above, font=\small] {$\frac{\gamma^2}{V(\eta_1)}$};
    \draw[dotted, black] (\Tone-\halfWidth, \A) -- (\Tone-\halfWidth, \A+\vShift+0.8);
    \draw[dotted, black] (\Tone+\halfWidth, \A) -- (\Tone+\halfWidth, \A+\vShift+0.8);

    \draw[<-, >=stealth, black, thin] (\Ttwo-0.1, \A+1.1) .. controls (\Ttwo+0.5, \A+2.0) .. (\Ttwo+1.5, \A+2.0) 
        node[right, font=\Large, inner sep=1pt] {$\substack{\textrm{IG approximation of } \tau^\star_2 \\ \frac{d}{dt}F(t;\gamma, C(\eta_2), \sqrt{V(\eta_2)})}$};

    \draw[<-, >=stealth, black, thin] (6.1, 3.3) .. controls (7.1, 2.8) .. (8.1, 2.8) 
        node[right, font=\small, fill=white, inner sep=1pt] {Random walk path $S_3(\omega, \cdot)$};
\end{tikzpicture}
    \caption{Illustration of the DP and IG approximations for $K = 3$.}
    \label{fig:approximation}
    \vspace{-0.1cm}
\end{figure*}

\noindent  Since $t_{\max, k}(\epsilon^\prime)$ depends on the locations $\Psi$, it is a random variable. 
In this section, we first use the properties of HPPs in order to provide a deterministic upper-bound on $t_{\max, k}(\epsilon^\prime)$. 
Second, we approximate this bound by using the memoryless property of the channel. 

\nopagebreak\subsection{Codebook and AWGN channel}

\noindent Let the codeword associated with each transmitter $k$ and each message $m$ be generated according to a zero-mean Gaussian distribution\footnote{While Gaussian signaling can be suboptimal in the non-asymptotic regime, we select it here for its analytical tractability.}, i.e., $P_{X_k} = \mathcal{N}(0, \mathsf{P})$, where $\mathsf{P} > 0$ denotes the average transmit power per symbol. Consequently, under the path-loss model \eqref{eq:receivedsignal}, the signal received from transmitter $k$ is distributed as $\mathcal{N}(0, \|\psi_k-\psi_0\|^{-\alpha}\mathsf{P})$. The AWGN power is $\sigma^2 > 0$.
The marginalization over the non-decoded signals, i.e., treating interference as noise, effectively increases the noise power for the $k$-th stage of SIC. The resulting residual-interference-plus-noise power is thus 
$\sigma_k^2 = \sigma^2 + \mathsf{P}\sum_{j=k+1}^\infty \|\psi_j-\psi_0\|^{-\alpha}$. 
Let the signal-to-residual-interference-plus-noise ratio (\SINR) at the $k$-th decoding stage be defined as $\eta_k \triangleq \frac{\mathsf{P}\|\psi_k-\psi_0\|^{-\alpha}}{\sigma_k^2}$. 
Consequently, for Gaussian codebooks over an AWGN channel, the single-symbol conditional information density at stage $k$, denoted by $\imath(x_k; y | x_1, \dots, x_{k-1})$, is obtained by substituting the probability density function of the Gaussian distribution into~\eqref{eq:n_letter_information_density}, yielding  \eqref{eq:mutualpdf}.

\nopagebreak\subsection{Hardcore Regulation}

    
     \noindent By definition, HPPs are \textit{locally bounded}. Specifically, for $R \geq 0$, the number of points within a ball $b(o,R)$ satisfies
     \setcounter{equation}{8}
\begin{equation}\label{eq:spaceregulation}
     \Psi(b(o,R)) \leq 1 + \rho_H R + \nu_H R^2, \quad \Pro-a.s.,
     \end{equation}
     where $\rho_H \triangleq \frac{\pi}{\sqrt{3}H}$ and $\nu_H \triangleq \frac{\pi}{\sqrt{12}H^2}$, cf. \cite[Lemma 3]{ke2024}. Then, the number of transmitters $K$ within the coverage region $b(\psi_0, R_{\text{cov}})$ is $\Pro$-almost-surely upper-bounded by $K \leq \rho_H R_{\text{cov}} + \nu_H R_{\text{cov}}^2$.

\begin{theorem}[\protect{\cite[Remark~11]{zagalo2026bis}}]
    \label{thm:SINRbound}
    Let $\Psi$ be a HPP with hardcore distance $H > 0$ and path-loss exponent $\alpha > 2$. Define the function $\eta^H_\alpha(r)$ as \begin{equation}\label{eq:sinrbound}
    \eta^H_\alpha(r) \triangleq \left(\frac{\pi^2}{\sqrt{3}}  + \frac{\pi \rho_H}{(\alpha-1)}r +  \frac{\pi \nu_H }{(\alpha-2)}r^2 + \frac{\sigma^2}{\mathsf{P}}r^{\alpha} \right)^{-1}.\end{equation} 
    Then, $\forall k \in \setK$, $\eta_k \geq \eta^H_\alpha(\|\psi_k\|)$, $\Pro$-almost surely.
\end{theorem}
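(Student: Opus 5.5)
We must show $\eta_k^{-1} = \frac{\sigma^2}{\mathsf{P}}\|\psi_k\|^{\alpha} + \|\psi_k\|^{\alpha}\sum_{j>k}\|\psi_j\|^{-\alpha}$ is bounded by the bracket in \eqref{eq:sinrbound} with $r=\|\psi_k\|$. The noise term matches $\frac{\sigma^2}{\mathsf{P}}r^\alpha$ exactly, so the plan reduces to bounding the residual interference sum $I_k \triangleq \sum_{j>k}\|\psi_j\|^{-\alpha}$ by $r^{-\alpha}\big(\frac{\pi^2}{\sqrt3} + \frac{\pi\rho_H}{\alpha-1}r + \frac{\pi\nu_H}{\alpha-2}r^2\big)$. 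Since the points are ordered by distance, all $\psi_j$ with $j>k$ lie outside $b(o,r)$ (ties at distance exactly $r$ are handled by closed-ball counting), so $I_k \le \sum_{\psi\in\Psi\setminus\{o\},\,\|\psi\|\ge r}\|\psi\|^{-\alpha}$.

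The key step is a Stieltjes/layer-cake representation. Let $N(s)\triangleq\Psi(b(o,s))-1$ be the number of transmitters within distance $s$. Then
\begin{equation*}
\sum_{\|\psi\|\ge r}\|\psi\|^{-\alpha} = \int_{[r,\infty)} s^{-\alpha}\,dN(s) \le -r^{-\alpha}N(r^-) + \alpha\int_r^\infty s^{-\alpha-1}N(s)\,ds,
\end{equation*}
obtained by integration by parts, using that $s^{-\alpha}N(s)\to0$ because $N(s)=O(s^2)$ by \eqref{eq:spaceregulation} and $\alpha>2$. Dropping the (nonpositive) boundary term and inserting the a.s. bound $N(s)\le \rho_H s+\nu_H s^2$ from \eqref{eq:spaceregulation} gives
\begin{equation*}
I_k \le \alpha\int_r^\infty\big(\rho_H s^{-\alpha}+\nu_H s^{1-\alpha}\big)ds = \frac{\alpha\rho_H}{\alpha-1}r^{1-\alpha}+\frac{\alpha\nu_H}{\alpha-2}r^{2-\alpha}.
\end{equation*}
This yields the right $r$-scaling but with constants $\alpha$ instead of $\pi$ and no constant term, so this crude version is not the paper's exact expression. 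To recover the stated form, which has a constant $\frac{\pi^2}{\sqrt3}$ and factors $\pi$, I would instead keep the boundary term and use the refined hardcore counting from \cite{ke2024,zagalo2026bis}. Disks of radius $H$ around the points are disjoint, and the interference from points at distance $\ge r$ can be compared to an integral of the area density $\frac{1}{\sqrt{12}H^2}$ (hexagonal packing) over an annulus enlarged by $H$. Expanding $\int_{r}^{\infty}2\pi s\,(s-H)^{-\alpha}\,ds$-type expressions, combined with a perimeter correction of order $\rho_H$, produces terms of the form $\frac{\pi\nu_H}{\alpha-2}r^{2-\alpha}$, $\frac{\pi\rho_H}{\alpha-1}r^{1-\alpha}$, and a constant multiple of $r^{-\alpha}$ accounting for the boundary layer, whose coefficient would be bounded by $\pi^2/\sqrt3$.

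The main obstacle is matching these exact constants. The integration-by-parts argument cleanly gives a bound of the stated shape, but pinning down $\pi/(\alpha-1)$, $\pi/(\alpha-2)$ and the additive $\pi^2/\sqrt3$ requires the specific annulus/packing computation of \cite[Remark~11]{zagalo2026bis}. My first attempt would be to bound $N(s)-N(r)$ on annuli $[r,s)$ by the packing estimate $\frac{\pi((s+H)^2-(r-H)^2)}{\pi H^2}\cdot\frac{\pi}{\sqrt{12}}$ and integrate against $d(s^{-\alpha})$. The final step is immediate: $\eta_k^{-1}\le(\eta_\alpha^H(\|\psi_k\|))^{-1}$ holds $\Pro$-a.s., which gives the claim since the bound used is almost sure.
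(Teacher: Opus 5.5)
There is a genuine gap: the proposal never derives the stated constants, and the way you hope to get them cannot work. Your integration-by-parts bound is correct. It already proves the statement for $2<\alpha\le\pi$, because then $\frac{\alpha\rho_H}{\alpha-1}\le\frac{\pi\rho_H}{\alpha-1}$ and $\frac{\alpha\nu_H}{\alpha-2}\le\frac{\pi\nu_H}{\alpha-2}$ termwise.

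For $\alpha>\pi$ (e.g.\ the paper's $\alpha=4$) it does not suffice. The excess $(\alpha-\pi)\bigl(\frac{\rho_H r}{\alpha-1}+\frac{\nu_H r^2}{\alpha-2}\bigr)$ grows like $r^2$, and the constant $\pi^2/\sqrt3$ cannot absorb it. This loss is intrinsic to using only \eqref{eq:spaceregulation}. An upper bound on $\Psi(b(o,s))$ does not prevent $N$ from jumping from $k$ to its maximal value at $s=r$, so it gives no control over the annulus counts that actually matter.

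Your refinement is described, not carried out. The annulus estimate with the hexagonal factor $\pi/\sqrt{12}$ is asserted rather than proved: that density is an asymptotic bound, and finite regions need a separate argument. Even granting the estimate, the computation you outline gives $r^\alpha I_k\le\frac{2\nu_H}{\alpha-2}r^2+\frac{(2\alpha-1)\rho_H}{\alpha-1}r$ (using $2H\nu_H=\rho_H$). This lies below the bracket at $\alpha=4$, but above it at $\alpha=6$, e.g.\ for $r=5H$.

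The plan breaks at the claim that the boundary layer contributes only a constant multiple of $r^{-\alpha}$, with coefficient at most $\pi^2/\sqrt3$. About $\pi r/H$ transmitters can sit just beyond distance $r$. So $r^\alpha I_k$ can contain a term of order $\pi r/H$ whose coefficient does not decay in $\alpha$. By contrast, the bracket's linear and quadratic coefficients decay like $1/\alpha$, and its constant is fixed.

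Concretely, place $18$ transmitters on the circle of radius $6H$; adjacent spacing is $12H\sin(\pi/18)\approx 2.08H$, so the hardcore constraint holds. For the closest of them, $r^\alpha I_1\ge 17$. For $\alpha=20$, the non-noise part of the bracket at $r=6H$ is only $\frac{\pi^2}{\sqrt3}\bigl(2+\frac{6}{19}\bigr)\approx 13.2$. Slightly perturbing the radii removes ties, so such configurations occur with positive probability, e.g.\ for a Mat\'ern type II process.

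Hence the bound as written is violated for large $\alpha$, and no completion of your sketch can reach it for every $\alpha>2$. What you have rigorously is the case $\alpha\le\pi$. Going further requires a genuine bound on the number of $2H$-separated points in an annulus, including the perimeter correction that finite packings need. It must be used with the boundary term $-r^{-\alpha}N(r^-)$ retained, and with an explicitly restricted range of $\alpha$.
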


Thanks to hardcore regulation, $t_{\max, k}(\epsilon^\prime)$ is not only  finite for any $\epsilon^\prime > 0$, but also upper-bounded. As illustrated in \figurename~\ref{fig:approximation}, if the slope $C(\eta_k)$ of the cumulative information density is lower-bounded by a strictly positive number, then $t_{\max, k}(\epsilon^\prime)$ 
is $\Pro$-almost-surely upper-bounded. In the following, we derive an approximation of $t_{\max, k}(\epsilon^\prime)$.

\nopagebreak\subsection{Information as a random walk}

\noindent Thanks to the memoryless property of the channel and the application of the random coding scheme, the cumulative information associated with the transmitted message $W_k$ at time $n\in\N$ can be modeled as a random walk.
This process is defined as the sum $S_k(n) = \sum_{j=1}^n I_{k,j}$, where the information increment at the $j$-th channel use is given by: $$I_{k,j} = \imath\left(X_{k,j}; Y_j \mid X_{1,j}, \dots, X_{k-1,j}\right).$$
Under the probability $\Pro_k : A\in\mathcal{A} \to \Pro(A \mid \Psi, W_k, \H_{k-1})$, the conditional information density increments are i.i.d.
Specifically, the expectation of the increments is the \textit{mutual information} $C(\eta_k) \triangleq \frac{1}{2}\log(1+\eta_k)$, so that $\mathbb{E}^o_k\left[I_{k,j}\right] = C(\eta_k)$,\footnote{\label{footnote:Po} The operators $\mathbb{E}^o_k$ and $\mathbb{V}^o_k$ are respectively the expectation and variance associated with the Palm probability $\Pro_k$.} $\Pro_k$-almost-surely.
Their variance corresponds to the \textit{channel dispersion} $V(\eta_k) \triangleq \frac{\eta_k}{\eta_k+1}$, so that $\mathbb{V}^o_k\left[I_{k,j}\right] = V(\eta_k)$,\textsuperscript{\ref{footnote:Po}}$ \Pro_k$-almost-surely. 
These quantities align with the standard results for Gaussian inputs over additive Gaussian channels, as detailed in \cite[Eqs. (2.13), (2.56)]{molavianjazi2014unified}. 
Consequently, the cumulative information density process $S_k \triangleq \left(S_k(n), n \in \mathbb{N}\right)$ behaves as a random walk with drift $C(\eta_k)$ and diffusion coefficient $\sqrt{V(\eta_k)}$. 
This stochastic characterization, which was originally introduced in \cite{polyanskiy2010variable}, facilitates the analysis of the decoding delay distribution and is illustrated in \figurename~\ref{fig:approximation}.


Recall that the actual decoding delay for transmitter $k$, $\tau_k$, as defined in \eqref{eq:taustardef}, is upper-bounded by the delay induced by the transmitted codeword. This delay is characterized as the first-passage time of the above random walk:
    \begin{equation}\label{eq:delaydef}
        \tau_k^\star = \inf\left\{ n \in \N : S_k(n) \geq \gamma \right\}.
    \end{equation}
In the following, we use the asymptotic properties of random walks to approximate the distribution of $\tau_k^\star$.

\nopagebreak\subsection{Diffusion Process approximation}
\noindent A common approximation associated with random walks comes from Donsker's theorem \cite{donsker1951invariance}, which states that the rescaled process $\left(\frac{S_k(\lfloor Nt \rfloor) - \lfloor Nt \rfloor C(\eta_k)}{\sqrt{N V(\eta_k)}}, t \geq 0 \right)$ converges weakly to a standard Brownian motion $B$ defined on $(\Omega, \mathcal{A}, \Pr)$ when $N\rightarrow \infty$. 
In our case, we aim to approximate the distribution of $\tau_k^\star$, namely the first-passage time of the random walk $S_k$ through a given threshold $\gamma$, which is discrete, by the first-passage time of a DP through $\gamma$, which is continuous. Below, $n$ denotes the discrete time index of the random walk $S_k(n)$, whereas $t$ denotes the continuous time variable of its diffusion approximation.
In the context of this paper, the DP approximation of $S_k$ leads to the DP $\hat{S}_k(t) = tC(\eta_k) +  \sqrt{V(\eta_k)} B(t)$, which implies that for all $n\in\N$, $S_k(n)$ is approximated by a Gaussian distribution $\mathcal{N}(nC(\eta_k), nV(\eta_k))$. 
This approximation, illustrated in \figurename~\ref{fig:approximation}, is accurate in the regime $\gamma \gg C(\eta)$, i.e. when the expected number of channel uses to threshold is large.  
The DP approximation provides a continuous approximation of the first-passage time $\tau_k^\star$ as an inverse Gaussian (IG) variable; see \cite[Section~3.2.2]{jeanblanc2009mathematical}. 
For a DP with positive drift $\mu$ and diffusion standard deviation rate $\nu$, the cumulative distribution function (CDF) of the first passage time to a threshold $\gamma > 0$ is, for $t > 0$, given by the IG CDF
\begin{multline}\label{eq:IGCDF}
F(t; \gamma, \mu, \nu) = \Phi \left( \frac{\mu t - \gamma}{\nu\sqrt{t}} \right) + e^{  \frac{2\gamma\mu}{\nu^2}} \Phi \left( \frac{-\mu t - \gamma}{\nu \sqrt{t}} \right),
\end{multline} 
where $\Phi$ is the CDF of the standard normal distribution.

Let $\hat{\tau}_k$ be the first-passage time of the approximating process $\hat{S}_k$ through the information threshold $\gamma$. 
For VLSF codes, the DP approximation leads to the approximation of the distribution of $\tau_k^\star$ under $\Pro_k$, by the CDF of $\hat{\tau}_k$, which follows an IG distribution with mean $\frac{\gamma}{C(\eta_k)}$ and shape $\frac{\gamma^2}{V(\eta_k)}$, cf. \cite[Section~3.2.2]{jeanblanc2009mathematical},
\begin{equation}\label{eq:DPapprox}
   \Pro_k\left(\hat{\tau}_k \leq n\right) = F\left(n; \gamma, C(\eta_k), \sqrt{V(\eta_k)}\right).
\end{equation}

We have the following maximal-delay bound.

\begin{proposition}\label{prop:approxtmax}
    Let $\Psi$, $H > 0$, $\alpha > 2$, and $\eta^H_\alpha$ be defined as in Theorem~\ref{thm:SINRbound}. Let $k\in\setK$, $\epsilon^\prime > 0$, and let $\hat{t}_{\max, k}(\epsilon^\prime)$ be as defined in \eqref{eq:def_t_max} for the first-passage time $\hat{\tau}_{k}$. Let $c^H_\alpha(r) =C(\eta^H_\alpha(r))$, $v^H_\alpha(r) =V(\eta^H_\alpha(r))$, with $\eta^H_\alpha$ defined in \eqref{eq:sinrbound}, and $\bar F_r(t) = 1 - F\left(t; \gamma, c^H_\alpha(r), \sqrt{v^H_\alpha(r)}\right)$, $0< r \leq R_{\text{cov}}$,  so that $F_{\|\psi_k\|}$ is a stochastic upper bound on $\hat{\tau}_k$. Then, $\Pro_k$-almost-surely, \begin{equation*}
        \hat{t}_{\max, k}(\delta - k\epsilon) \leq \bar F^{-1}_{\|\psi_k\|}\left(\frac{\delta}{\rho_H\|\psi_k\| + \nu_H \|\psi_k\|^2} - \epsilon \right).
    \end{equation*} 
    Note that $\bar F_r^{-1}$ is the inverse tail function of an IG distribution, which can be computed explicitly.
\end{proposition}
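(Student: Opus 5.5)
The proof has three parts: a stochastic dominance argument, a union bound over the SIC stages, and a counting bound on the number of stages. All of it is done conditionally on $\Psi$ and under $\Pro_k$.

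\textbf{Monotonicity in the SINR.} By Theorem~\ref{thm:SINRbound}, $\eta_j \ge \eta^H_\alpha(\|\psi_j\|)$ for every $j\le k$. The function $\eta^H_\alpha$ is decreasing in $r$, and the transmitters are ordered so that $\|\psi_j\|\le\|\psi_k\|$. Hence $\eta_j \ge \eta^H_\alpha(\|\psi_k\|)$ for all $j\le k$.

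Both $C$ and $V$ are increasing in $\eta$. I will then show that the IG tail $1-F(t;\gamma,\mu,\nu)$ is nonincreasing when the drift $\mu$ increases. The cleanest route is a pathwise coupling. Take $\hat S(t)=t\mu+\nu B(t)$ and let the drift grow, handling the accompanying change in $\nu=\sqrt{V}$ either by differentiating \eqref{eq:IGCDF} directly or by a time-change argument. This gives $\Pro_k(\hat\tau_j>t)\le \bar F_{\|\psi_k\|}(t)$ for each $j\le k$, which is the stochastic bound asserted in the statement.

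\textbf{Union bound over stages.} The stage-$k$ completion time is $\hat T_k=\max_{j\le k}\hat\tau_j$. By the union bound,
\[
\Pro(\hat T_k>t\mid\Psi)\le \sum_{j=1}^k \Pro(\hat\tau_j>t\mid\Psi)\le k\,\bar F_{\|\psi_k\|}(t).
\]
Taking $t$ with $\bar F_{\|\psi_k\|}(t)\le (\delta-k\epsilon)/k$ gives $\Pro(\hat T_k>t\mid\Psi)\le\delta-k\epsilon$, so by definition \eqref{eq:def_t_max} we get $\hat t_{\max,k}(\delta-k\epsilon)\le t$.

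It remains to show that $(\delta-k\epsilon)/k = \delta/k-\epsilon$ is at least $\delta/(\rho_H\|\psi_k\|+\nu_H\|\psi_k\|^2)-\epsilon$. Apply \eqref{eq:spaceregulation} with $R=\|\psi_k\|$, counting points on the closed ball; the constant in \eqref{eq:spaceregulation} comes from an area argument, so it also covers the boundary. The $k$ transmitters $\psi_1,\dots,\psi_k$ together with $\psi_0$ give $k+1\le 1+\rho_H\|\psi_k\|+\nu_H\|\psi_k\|^2$, $\Pro$-a.s. Hence $\delta/k \ge \delta/(\rho_H\|\psi_k\|+\nu_H\|\psi_k\|^2)$. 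Since $\bar F^{-1}_{\|\psi_k\|}$ is nonincreasing, a larger argument gives a smaller quantile, and we obtain
\[
\hat t_{\max,k}(\delta-k\epsilon)\le \bar F^{-1}_{\|\psi_k\|}\!\left(\tfrac{\delta}{k}-\epsilon\right)\le \bar F^{-1}_{\|\psi_k\|}\!\left(\tfrac{\delta}{\rho_H\|\psi_k\|+\nu_H\|\psi_k\|^2}-\epsilon\right).
\]
This requires the last argument to lie in $(0,1)$, which I will either assume or note follows from the hypothesis $\delta-K\epsilon\in(0,1)$.

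\textbf{Main obstacle.} The delicate step is the stochastic ordering of IG laws under a simultaneous increase of drift $C$ and variance $V$: a larger variance can fatten the right tail. I would first try to show $\partial_\eta \bar F>0$ fails nowhere for $t$ in the relevant region (beyond the mean $\gamma/C$), using \eqref{eq:IGCDF} and $V<1$. If that fails, I would restrict the claim to quantile levels below $1/2$, where the drift effect dominates.
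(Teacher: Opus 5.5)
Your skeleton matches the paper's. You lower-bound every $\eta_j$, $j\le k$, by $\eta^H_\alpha(\|\psi_k\|)$, transfer this to the inverse Gaussian laws, control $\max_{j\le k}\hat\tau_j$, and finish with $k\le\rho_H\|\psi_k\|+\nu_H\|\psi_k\|^2$. Your union bound is a legitimate substitute for the paper's product $G_k$ followed by Bernoulli's inequality. It reaches the same level $(\delta-k\epsilon)/k$ without assuming independence across stages. Your closed-ball remark fills a point the paper passes over: let $R\downarrow\|\psi_k\|$ in \eqref{eq:spaceregulation}. One small correction: positivity of $\delta/(\rho_H\|\psi_k\|+\nu_H\|\psi_k\|^2)-\epsilon$ does not follow from $\delta-K\epsilon\in(0,1)$, since that denominator can exceed $K$; it has to be assumed, or the bound read as vacuous. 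The real problem is the step you flag yourself. The proposal never proves that $\eta\mapsto F(t;\gamma,C(\eta),\sqrt{V(\eta)})$ is nondecreasing. Yet the bound $\Pro_k(\hat\tau_j>t)\le\bar F_{\|\psi_k\|}(t)$, and with it the whole proposition, rests on that monotonicity.

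You mention differentiating or a time change but carry out neither, and the routes you sketch lack the idea that makes it work.

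\begin{itemize}
\item Coupling in the drift alone does not handle the simultaneous growth of $V$.
\item Your fallback, restricting to quantile levels below $1/2$ ``where the drift dominates'', points the wrong way. The right tail decays like $e^{-\mu^2t/(2\nu^2)}$, so it is exactly where a larger variance can hurt, and drift does not dominate there automatically.
\item You wrote $\partial_\eta\bar F>0$ where you need $\partial_\eta\bar F<0$.
\end{itemize}

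The missing ingredient, used in Lemma~\ref{lem:stochasticdom}, is that two quantities both increase with $\eta$: $\nu^2=V(\eta)$, and the ratio $\mu/\nu^2=C(\eta)/V(\eta)=\frac{(1+\eta)\log(1+\eta)}{2\eta}$. The ratio increases because $\log x<x-1$ for $x=1+\eta>1$.

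The paper exploits this by differentiating \eqref{eq:IGCDF}. Write $F=\Phi(A_+)+e^{B}\Phi(A_-)$ with $A_\pm=\frac{\pm\mu t-\gamma}{\nu\sqrt t}$ and $B=2\gamma\mu/\nu^2$. The identity $e^{B}\phi(A_-)=\phi(A_+)$, with $\phi=\Phi'$, reduces $dF/d\eta$ to $\phi(A_+)\frac{d}{d\eta}(A_++A_-)+e^{B}\Phi(A_-)\frac{dB}{d\eta}$. Here $A_++A_-=-2\gamma/(\nu\sqrt t)$ depends only on $\nu$, and $B$ depends only on the ratio. So both terms are positive for every $t>0$.

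Your time-change idea closes the gap equally well once this is seen. By Brownian scaling, $\hat\tau\overset{d}{=}\tilde\tau_{\mu/\nu^2}/\nu^2$, where $\tilde\tau_\theta$ is the passage time to $\gamma$ of $\theta s+W(s)$ for a standard Brownian motion $W$. Hence $F(t)=\mathbb{P}(\tilde\tau_{\mu/\nu^2}\le\nu^2t)$. This is increasing in $\mu/\nu^2$ by pathwise coupling and increasing in $\nu^2$, for all $t>0$.
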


\begin{proof}
    See Appendix~\ref{proof:approxtmax}.
\end{proof}


\begin{figure}[t]
    \centering
    \input{mc_simulations}
    \caption{$\delta = 10^{-3}, \epsilon=10^{-5}, \alpha=4, H=4, M=2^{50}, \sigma^2/\mathsf{P} = -100$ dB. }
    \label{fig:tmax}
\end{figure}

\nopagebreak\section{Numerical Results}\label{sec:simulations}
\noindent The simulated HPPs (green boxplot) in \figurename~\ref{fig:tmax} are Matérn hardcore processes of type II \cite{haenggi2013stochastic}, generated by thinning a stationary PPP ($\lambda = 1$) to enforce a minimum inter-point distance of $2H=8$. This yields an intensity $\lambda_H = \frac{1-e^{-4\lambda H^2\pi}}{4H^2\pi}$ \cite[p. 58]{haenggi2013stochastic}. For a fair comparison, the reference PPPs (blue boxplot) are simulated with the same intensity $\lambda_H$.
Box-and-whisker plots display the simulated delay statistics (min, max, and quantiles) over 1\,000 Monte Carlo runs versus distance for 1\,000 simulated point processes (PPP and HPP). The theoretical bound from Proposition~\ref{prop:approxtmax} is plotted as the red curve, which illustrates that, to guarantee a decoding probability of at least $1- \delta$, a transmitter at distance $r > 0$ must be allocated a maximal delay of $\bar F^{-1}_r\left(\frac{\delta}{\rho_H r + \nu_H r^2} - \epsilon\right)$. In particular, the latency statistics shown in \figurename~\ref{fig:tmax}  demonstrate that without hardcore regulation, the PPP delays can exceed this theoretical bound, whereas the HPP latencies remain strictly below it. This implies that unregulated transmitter locations, which may be arbitrarily close together, prevent maximal delay guarantees when VLSF decoding is coupled with SIC. However, employing a hardcore-regulated MAC resolves this issue, ensuring a maximum delay for transmitters within a given area at a target error probability.



\nopagebreak\section{Conclusion}\label{sec:conclusion}

\noindent 
We developed a framework for analyzing the reliability and latency of VLSF codes over hardcore-regulated MAC. 
Using an inverse Gaussian approximation for the decoding delay, we obtained a maximal-delay bound that depend on the transmitter-receiver distance, the target reliability, and the number of SIC-decoded users.  
Since transmitters are indexed by increasing distance, this bound grows with the decoding stage, implying that the latency budget is dictated by the transmitters decoded last.
Numerical results confirm that hardcore regulation is suffice for uniform performance guarantees. 
Future work will extend this analysis to fading channels and incorporate void regulation to jointly control signal strength and interference.

\begin{appendices}

\section{Proof of Proposition~\ref{prop:pertransmittererror}}\label{proof:pertransmittererror}
\noindent Define the stage-wise error event occurred at decoding the message of transmitter $k$ as $\E_k(m) \triangleq  \{\hat{W}_k \neq m, {W}_k = m\}$, which is characterized by the conditional information density process $S_{k}^{(m)}(n) $ in \eqref{eq:def_S_nk}. 
Hence, under SIC, the error event for transmitter $k$ can be partitioned as \vspace{-0.4em}\begin{equation}\label{eq:partition} 
\{\hat W_k \neq W_k \} = \bigcup_{i=1}^k\E_i \cap \H_{i-1},
\end{equation} where $\E_i$ denotes the per-stage error event $\E_i \triangleq \bigcup_{m=1}^M\E_i(m) = \{\hat W_i \neq W_i\}$. The goal of the proof is to show that taking the threshold $\gamma = \log\frac{M-1}{\epsilon}$, coupled with the 
maximal delay $t_{\max, k}(\delta-k\epsilon)$, ensures that  $\Pr\left(\hat W_k \neq W_k \mid \Psi \right) \leq \delta$, almost-surely, for $k \in \setK$. This conditional error event can be decomposed into two types of errors: 
\paragraph{False alarm}
For $W_k=m$, given all previous stages are decoded correctly, i.e., under $\mathcal{H}_{k-1}$, an incorrect codeword from transmitter $k$ crosses the threshold at time $n$ if the event $\E_{k,n}^{\rm FA}(m) \triangleq \left(\bigcup_{m^\prime \neq m} \left\{S_{k}^{(m^\prime )}(n) \geq \gamma \right\} \right)  \cap \left( \bigcap_{p=0}^{n-1} \bigcap_{m^\prime \in \setM} \left\{S_{k}^{(m^\prime)}(p) < \gamma \right\}\right)$ occurs.
The false alarm event at stage $k$, when $W_k=m$, is 
\begin{equation}\label{eq:defFA}
\E_{k}^{\rm FA}(m) \triangleq \bigcup_{n=1}^\infty \E_{k,n}^{\rm FA}(m) \cap \{{\tau}_k = n \}.
\end{equation}
For VLSF codes for the point-to-point channel, Theorem~\ref{thm:threshold_based_rule} implies that, for $\gamma \geq \log \frac{M-1}{\epsilon}$, $\Pro$-almost-surely, \begin{equation}\label{eq:falsealarmprobability}
    \Pro\left(\E_{k}^{\rm FA}(m) \mid W_k=m, \H_{k-1}, \Psi\right) \leq \epsilon.
\end{equation}




\paragraph{Miss detection}
   
   No codeword from transmitter $k$ crosses the threshold before time $n$, i.e., $\E_{k,n}^{\rm MD} \triangleq \bigcap_{p=1}^n \bigcap_{m'\in\setM} \left\{ S_{k}^{(m')}(p) < \gamma \right\}.$ 
Under SIC, the decoding process for transmitter $k$ stops with probability $1-\epsilon^\prime$ before the time instant $t^\prime \triangleq \inf\left\{n > 0 : \Pro\left(\bigcup_{i=1}^k \E_{i,n}^{\rm MD} \cap \H_{i-1} \mid \Psi \right) \leq \epsilon^\prime \right\}$.

The conditional missed detection event is equivalent to the event where the stopping time exceeds the deadline, i.e., $\E_{i,n}^{\rm MD} \cap \H_{i-1} = \{{\tau}_i > n\}$, and the miss detection probability using SIC of transmitter $k$ is $\bigcup_{i=1}^{k} \E_{i,n}^{\rm MD} \cap \H_{i-1}  = \{T_k > n\}$. Thus, $t_{\max, k}(\epsilon^\prime) = t^\prime$, $\Pro$-almost-surely.

\paragraph{Total error}
By the definition of VLSF false alarm and miss detection, if the maximal delay is set to $n^\prime$, the error event satisfies $ \E_i(m) \subseteq \left(\bigcup_{n=1}^{n^\prime}   \E_{i,n}^{\rm FA}(m) \cap \{{\tau}_i = n \} \right) \cup  \E_{i,n^\prime}^{\rm MD}$, hence 
\begin{equation}\label{eq:boundform}
    \E_i(m) \overset{\eqref{eq:defFA}}{\subseteq} \E_{i}^{\rm FA}(m)  \cup  \E_{i,n^\prime}^{\rm MD}
\end{equation}

Combining \eqref{eq:partition} and \eqref{eq:boundform}, for $n^\prime = t_{\max,k}(\epsilon')$, we obtain
\begin{multline}\label{eq:unionbound}
    {\small \bigcup_{i=1}^k \E_i \subset\left(\bigcup_{i=1}^{k}   \bigcup_{m=1}^M \E_{i}^{\rm FA}(m)\right)  \cup \left(\bigcup_{i=1}^{k}  \E_{i,t_{\max,k}(\epsilon')}^{\rm MD} \cap \H_{i-1}\right).}
\end{multline}
Thus, since the messages $m$ are equally likely, applying the union bound yields, $\Pro$-almost-surely,
\begin{align}
   &\Pro\left(\hat W_k \neq W_k \mid \Psi\right) \overset{\eqref{eq:partition}}{=} \Pro\left(\bigcup_{i=1}^{k} \E_i \cap \H_{i-1}\mid \Psi \right) \nonumber\\
    &\overset{\eqref{eq:unionbound}}{\leq} \sum_{i=1}^k \frac{1}{M}\sum_{m=1}^M\Pro\left(\E_{i}^{\rm FA}(m) \cap \H_{i-1} \mid \Psi, W_i=m\right) \nonumber\\& \quad \quad \quad \quad + \Pro\left(\bigcup_{i=1}^{k} \E_{i,{t_{\max,k}(\epsilon')}}^{\rm MD}\cap \H_{i-1}\mid \Psi \right) 
    \leq k\epsilon + \epsilon^\prime \label{eq:bayes} 
\end{align}
where \eqref{eq:bayes} comes from \eqref{eq:falsealarmprobability} and the inequality $\Pro(A \cap B \mid C) = \Pro(B\mid C)\Pro(A \mid B, C) \leq \Pro(A \mid B, C)$, for all $A,B,C\in\mathcal{A}$. Finally, setting $\epsilon^\prime = \delta - k\epsilon$ completes the proof. \hfill $\square$


\section{Proof of Proposition~\ref{prop:approxtmax}}\label{proof:approxtmax}

\noindent We first prove that the first-passage time is stochastically decreasing w.r.t. the \SINR.

\begin{lemma}\label{lem:stochasticdom}
    Let $\gamma>0$ be the decoding threshold. If $\eta_1 \geq \eta_2$, then $\forall t > 0$, the IG cumulative distribution function satisfies $F\left(t; \gamma, C(\eta_1),\sqrt{V(\eta_1)}\right) \geq F\left(t; \gamma, C(\eta_2),\sqrt{V(\eta_2)}\right).$
\end{lemma}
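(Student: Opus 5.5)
The plan is to avoid differentiating the closed form \eqref{eq:IGCDF} directly and to use its probabilistic meaning instead. By \cite[Section~3.2.2]{jeanblanc2009mathematical}, $F(t;\gamma,\mu,\nu)$ is the CDF of the first-passage time of the diffusion $s\mapsto \mu s+\nu B(s)$ through the level $\gamma$. Here $B$ is a standard Brownian motion. Equivalently,
\begin{equation*}
F(t;\gamma,\mu,\nu)=\Pr\Big(\sup_{0\le s\le t}\big(\mu s+\nu B(s)\big)\ge\gamma\Big).
\end{equation*}
Monotonicity in the drift $\mu$ alone would follow from a trivial pathwise comparison. The difficulty is that raising $\eta$ increases both $\mu=C(\eta)$ and $\nu=\sqrt{V(\eta)}$, and a larger $\nu$ does not by itself make crossing easier. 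To handle this, I will divide by $\nu>0$. The crossing event then becomes
\begin{equation*}
\sup_{0\le s\le t}\big(B(s)+\theta s\big)\ge \gamma/\nu, \qquad \theta\triangleq \mu/\nu .
\end{equation*}
Now $\eta$ enters only through the normalized drift $\theta(\eta)=C(\eta)/\sqrt{V(\eta)}$ and the normalized threshold $\gamma/\sqrt{V(\eta)}$.

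The key steps are as follows.
\begin{itemize}
\item \emph{Threshold.} Since $V(\eta)=\eta/(1+\eta)$ is increasing in $\eta>0$, the normalized threshold $\gamma/\sqrt{V(\eta)}$ is non-increasing in $\eta$.
\item \emph{Drift.} I must show that $\theta(\eta)=\tfrac12\log(1+\eta)\sqrt{(1+\eta)/\eta}$ is non-decreasing. I will take the logarithmic derivative of $g(\eta)=\log(1+\eta)\sqrt{(1+\eta)/\eta}$:
\begin{equation*}
\frac{g'(\eta)}{g(\eta)}=\frac{1}{(1+\eta)\log(1+\eta)}-\frac{1}{2\eta(1+\eta)}.
\end{equation*}
This is positive if and only if $2\eta>\log(1+\eta)$, which holds for all $\eta>0$ because $\log(1+\eta)\le\eta$.
\item \emph{Coupling.} Fix a single Brownian path $B$. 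For $\eta_1\ge\eta_2$ we have, pathwise, $B(s)+\theta(\eta_1)s\ge B(s)+\theta(\eta_2)s$ for all $s\ge0$, and also $\gamma/\sqrt{V(\eta_1)}\le\gamma/\sqrt{V(\eta_2)}$. Hence the crossing event for $\eta_2$ is contained in the crossing event for $\eta_1$. Taking probabilities gives the claimed inequality between the two values of $F$, for every $t>0$.
\end{itemize}

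I expect the main obstacle to be the drift step, namely that $C/\sqrt{V}$ is monotone in $\eta$. The opposing effect of $\nu$ is exactly what makes the lemma non-obvious, and the elementary inequality $\log(1+\eta)<2\eta$ is what resolves it. If one prefers an argument working only with \eqref{eq:IGCDF}, the fallback is to differentiate $F$ in $\eta$ through its parameters $(\mu,\nu)$. That derivative is messy because of the term $e^{2\gamma\mu/\nu^2}$, so I would use the coupling argument and state the scaling identity $F(t;\gamma,\mu,\nu)=F(t;\gamma/\nu,\mu/\nu,1)$ explicitly. This identity follows from the two forms of $\Phi$-arguments in \eqref{eq:IGCDF}, since $2\gamma\mu/\nu^2=2(\gamma/\nu)(\mu/\nu)$.
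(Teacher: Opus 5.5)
Your proof is correct, but it takes a different route from the paper's.

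The paper argues purely analytically from the closed form \eqref{eq:IGCDF}. It differentiates $F$ in $\eta$ and uses the identity $e^{B}\phi(C)=\phi(A)$ to reduce the derivative to $\phi(A)\,\frac{d}{d\eta}\bigl(-\frac{2\gamma}{\nu\sqrt{t}}\bigr)+e^{B}\Phi(C)\,\frac{dB}{d\eta}$. It then checks that $\nu$ and $\mu/\nu^{2}=\frac{(1+\eta)\log(1+\eta)}{2\eta}$ are both increasing, the latter via $\log x<x-1$.

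You instead use the first-passage interpretation and normalize by $\nu$. A pathwise coupling then only requires $\nu$ increasing and $\theta=\mu/\nu$ increasing. Your logarithmic-derivative check of $\theta$, reducing to $\log(1+\eta)<2\eta$, is right, and so is your scaling identity $F(t;\gamma,\mu,\nu)=F(t;\gamma/\nu,\mu/\nu,1)$.

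What your approach buys:
\begin{itemize}
\item It avoids the Gaussian-density identity entirely.
\item It makes transparent why the competing increase of the diffusion coefficient is harmless: after normalization, a larger $\nu$ only lowers the effective threshold.
\item Its sufficient condition is slightly weaker than the paper's. Since $\mu/\nu=(\mu/\nu^{2})\,\nu$ with both factors positive, the paper's conditions imply yours. Correspondingly, you need only $\log(1+\eta)<2\eta$ rather than $\log(1+\eta)<\eta$.
\end{itemize}

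What the paper's computation buys:
\begin{itemize}
\item It is self-contained given \eqref{eq:IGCDF}, with no appeal to the Brownian representation. You rely on the cited fact that $F$ is the first-passage CDF of $\mu s+\nu B(s)$, which the paper also invokes.
\item It yields strict monotonicity, $dF/d\eta>0$. Your event inclusion gives only the non-strict inequality, which is all the lemma claims.
\end{itemize}
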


\begin{proof}
Let $F(t)$ be the IG CDF as defined in \eqref{eq:IGCDF}. To prove stochastic dominance induced by a higher SrINR for all $t > 0$, it suffices to show that the CDF is monotonically increasing with respect to the SrINR, i.e.,
$\frac{dF(t)}{d\eta} > 0.$
We rewrite the IG CDF as $F(t) = \Phi(A) + e^B\Phi(C),$
where $A = \frac{\mu t - \gamma}{\nu\sqrt{t}}, B = \frac{2\gamma\mu}{\nu^2}, C = \frac{-\mu t - \gamma}{\nu\sqrt{t}}$. Here, $\mu = \frac{1}{2}\log(1+\eta)$ and $\nu = \sqrt{\frac{\eta}{1+\eta}}$.
Let $\phi = \Phi^\prime$. We have 
$\frac{dF}{d\eta}
= \phi(A) \frac{dA}{d\eta}
+ e^B\Phi(C)\frac{dB}{d\eta}
+ e^B\phi(C)\frac{dC}{d\eta}.$
For the Gaussian density function $\phi$, it holds that $e^B\phi(C) = \phi(A)$, so
\begin{equation}\label{eq:chain_rule_IG_CDF}
\frac{dF}{d\eta}
= \phi(A)\left( \frac{dA}{d\eta} + \frac{dC}{d\eta} \right)
+ e^B\Phi(C)\frac{dB}{d\eta}.
\end{equation}

The first component is 
\begin{equation}\label{eq:simplification1}
\frac{dA}{d\eta} + \frac{dC}{d\eta}
= \frac{d}{d\eta}\left( \frac{-2\gamma}{\nu\sqrt{t}} \right) =  \frac{2\gamma}{\nu^2\sqrt{t}} \frac{d\nu}{d\eta} > 0,
\end{equation}
where \eqref{eq:simplification1} comes from $\frac{\mu t - \gamma}{\nu\sqrt{t}} + \frac{-\mu t - \gamma}{\nu\sqrt{t}} = \frac{-2\gamma}{\nu\sqrt{t}}$, $\gamma, \nu,t > 0$ and $d\nu/ d\eta > 0$. 
Next, we analyze $\frac{dB}{d\eta} = 2\gamma \frac{d}{d\eta} (\frac{\mu}{\nu^2})$.
Define $h(\eta) = \frac{\mu}{\nu^2}
= \frac{(1+\eta)\log(1+\eta)}{2\eta}$. Let $x = 1+\eta > 1$. Then $h(x) = \frac{x \log x}{2(x-1)}$,
and its derivative is $\frac{dh(x)}{dx}= \frac{1}{2(x-1)^2} \left( x - 1 - \log x \right).$
Since $\log x < x - 1$ for all $x>1$, it follows that $\frac{dh}{dx} > 0$, and hence $\frac{dB}{d\eta} > 0$. Finally, since $\phi(A) > 0$, $\Phi(C) > 0$  and $e^B>0$, both terms in \eqref{eq:chain_rule_IG_CDF} are strictly positive. Thus, $\frac{dF(t)}{d\eta} > 0 \quad \forall t>0$, which concludes the proof.
\end{proof}

\begin{proof}[End of proof of Proposition~\ref{prop:approxtmax}]
The CDF of $\max_{i\leq k}\tau_i^\star$, under $\Pro_k$, is approximated with \eqref{eq:DPapprox} by $G_k(t) \triangleq \prod_{i=1}^k F\left(t; \gamma, C(\eta_i), \sqrt{V(\eta_i)}\right)$.
Let $\epsilon^\prime = \delta - k\epsilon$. 
As an approximation of $t_{\max, k}(\epsilon^\prime)$, we solve $t^\prime = \inf\{ t > 0 : G_k(t) \geq 1-\epsilon^\prime\}.$ According to Lemma~\ref{lem:stochasticdom}, $F(t; \gamma, c^{H}_\alpha(r), \sqrt{v^{H}_\alpha(r)})$ decreases w.r.t. $r>0$ since $\eta_\alpha^H$ is a decreasing function of $r$. The latter coupled with Theorem~\ref{thm:SINRbound}, i.e. $\eta_k \geq \eta_\alpha^H(\|\psi_k\|)$, yields that it is sufficient to solve $$t^\prime \leq \inf\{ t > 0 : \bar F_{\|\psi_k\|}(t) \leq 1- (1 - \epsilon^\prime)^{1/k}\},$$ in order to satisfy the bound on $\epsilon^\prime$. Thus, with \eqref{eq:spaceregulation} and the fact that $\bar F^{-1}_r(\epsilon^\prime)$ decreases w.r.t. $\epsilon^\prime$, $t^\prime$ is upper-bounded by $\bar F_r^{-1}(1-(1-\epsilon^\prime)^{1/k})$. Furthermore, since $0 < \frac{1}{k} \leq 1$ and $-\epsilon^\prime \geq -1$, with Bernoulli's inequality we get $1-(1-\epsilon^\prime)^{1/k} \geq \epsilon^\prime /k.$ Thus $\bar F^{-1}_r(1-(1-\epsilon^\prime)^{1/k}) \leq \bar F^{-1}_r(\epsilon^\prime/k).$ Using \eqref{eq:spaceregulation}, we get $k \leq \rho_H \|\psi_k\| + \nu_H \|\psi_k\|^2$, thus $\frac{\delta - k\epsilon }{\rho_H\|\psi_k\| + \nu_H \|\psi_k\|^2} \geq \frac{\delta}{\rho_H\|\psi_k\| + \nu_H \|\psi_k\|^2} - \epsilon$. Since $\bar F^{-1}_r$ is decreasing, $\lim_{N\to\infty} \hat{t}_{\max, k}(\delta - k\epsilon) \leq \bar F^{-1}_{\|\psi_k\|}\left(\frac{\delta}{\rho_H\|\psi_k\| + \nu_H \|\psi_k\|^2} - \epsilon\right)$.
    \end{proof}

\end{appendices}

\section*{Acknowledgment}
   \noindent This work is supported in part by the French National Agency for Research (ANR) through the project ANR-23-CMAS-0023 of the RIS3 program and the project ANR-22-PEFT-0010 of the France 2030 program PEPR Réseaux du Futur; and in part by the Agence de l'innovation de défense (AID) through the project UK-FR 2024352.

\bibliographystyle{IEEEbib}
\bibliography{references}

\end{document}